\documentclass{article}
\usepackage{epsfig}
\usepackage{subfigure}
\usepackage{color}
\textwidth=6.5in \textheight=8.5in \oddsidemargin=0in

\usepackage{amsmath,amssymb,amscd,amsthm} 
\title{Dynamics of conservative peakons in a system of Popowicz}  
\author{L.E. Barnes and A.N.W. Hone\footnote{Currently on leave at the School of Mathematics and Statistics, University of New South Wales, Kensington NSW 2052, Australia.}\\
School of Mathematics, Statistics and Actuarial Science\\ 
Sibson Building, University of Kent\\
Canterbury CT2 , 
UK
}

%===============================================Fordy macros========
\newtheorem{rema}{Remark}
\newcommand{\br}{\begin{rem}}
\newcommand{\er}{\end{rem}}
\newcommand{\bex}{\begin{exa}}
\newcommand{\eex}{\end{exa}}
\newcommand{\bd}{\begin{Def}}
\newcommand{\ed}{\end{Def}}
\newtheorem{theorem}[rema]{Theorem}
\newcommand{\bt}{\begin{theorem}}
\newcommand{\et}{\end{theorem}}
\newtheorem{lemma}[rema]{Lemma}
\newtheorem{corollary}[rema]{Corollary}
\newcommand{\bl}{\begin{lemma}}
\newcommand{\el}{\end{lemma}}
\newcommand{\be}{\begin{equation}}
\newcommand{\ee}{\end{equation}}
\newcommand{\bea}{\begin{eqnarray}}
\newcommand{\eea}{\end{eqnarray}}

\newcommand{\adots}{\mathinner{\mkern2mu\raise1pt\hbox{.}\mkern2mu
\raise4pt\hbox{.}\mkern2mu\raise7pt\hbox{.}\mkern1mu}}

%==========================================end of Fordy macros============
 
% == Hone macros == consistent? == need to check theorems, lemmas etc. 

\newcommand{\beq}{\begin{equation}}  
\newcommand{\eeq}{\end{equation}}  
\newcommand{\bear}{\begin{array}}  
\newcommand{\eear}{\end{array}} 
\newcommand\la{{\lambda}}   
   
\newcommand\ka{{\kappa}}

\newcommand\rd{{\mathrm{d}}}  
\newcommand\rb{{\mathrm{b}}}  

\newcommand\sgn{{\mathrm{sgn}}}

\newtheorem{thm}{Theorem}[section]

\newtheorem{rem}[thm]{Remark}

\newtheorem{exa}[thm]{Example}

\theoremstyle{remark}

\newcommand{\R}{{\mathbb R}}

% == end of Hone macros ==

\begin{document} 

\maketitle
 
\begin{abstract}
We consider a two-component Hamiltonian system of 
partial differential equations 
with quadratic nonlinearities introduced by Popowicz, which has the form 
of a coupling between the Camassa-Holm and Degasperis-Procesi equations. Despite having reductions 
to these two integrable partial differential equations, the Popowicz system  itself  is not  integrable.  Nevertheless, 
as  one of the authors showed with Irle, it admits 
distributional solutions of peaked soliton (peakon) type, with the dynamics of $N$ 
peakons being determined by a Hamiltonian system on a phase space of dimension $3N$.      
As well as the trivial case of a single peakon ($N=1$), the case $N=2$ is Liouville 
integrable. We present the explicit solution for the two-peakon dynamics, and describe 
some of the novel features of the interaction of peakons in the Popowicz system. 
\end{abstract} 
 
\section{Introduction} 
For the past 25 years there has been a huge amount of interest 
in partial differential equations (PDEs) which admit peaked soliton solutions, known as 
peakons, with a discontinuous first derivative at the peaks. 
This began with the work of Camassa and Holm \cite{ch}, who found the integrable 
PDE 
\beq\label{cheq} 
u_t + 2\ka u_x-u_{xxt} -uu_{xxx} - 2u_xu_{xx}+3uu_x =0 
\eeq 
in the context of shallow water wave theory. In fact this was a 
rediscovery, since the integrability of the latter equation had already 
been recognized in the work of Fokas and Fuchssteiner on 
hereditary symmetries and recursion operators \cite{ff}. 
However, the pioneering contribution of Camassa and Holm was 
their analysis of the remarkable properties of the solutions 
of  (\ref{cheq}), and in particular the fact that in the absence 
of  linear dispersion ($\ka=0$) it 
has multipeakon solutions of the form 
\beq\label{peaks} 
u(x,t)=\sum_{j=1}^N p_j(t)\, e^{-|x-q_j(t)|}, 
\eeq as well 
as displaying wave breaking, and also (for $\ka>0$) smooth solitons vanishing at spatial infinity. 

The equation (\ref{cheq}) with  $\ka=0$  is the case $\rb=2$ of the 1-parameter family 
\beq\label{bfam} 
m_t + u\, m_x +\rb \,u_x \,m = 0, \qquad m=u-u_{xx}, 
\eeq 
introduced in \cite{dhh} after it was shown that the case $\rb=3$, 
identified by Degasperis and Procesi \cite{dp}, is also integrable 
(linear dispersion can always be removed by a combination of a shift $u\to u\,+\,$const and a Galilean transformation). 
With the inclusion of linear dispersion, the whole b-family 
of equations (\ref{bfam}) was subsequently derived via 
shallow water approximations \cite{dgh, cl}. 
All of the equations in the family %
have at least one  Hamiltonian structure,  given by 
\beq\label{hamstr} 
m_t = {\bf B} \, \frac{\delta H}{\delta m}, 
\eeq 
where (subject to appropriate modifications for $\rb=0,1$)  %$H=\int m \log m \, \rd x$ for $\rb=1$, or otherwise 
\beq\label{bop}  H=\frac{1}{\rb-1}\int m \, \rd x, \qquad\mathrm{with}\quad 
{\bf B} = -{\rb^2} \, m^{1-1/\rb}  \partial_x m^{1/\rb} 
{\cal L}^{-1} m^{1/\rb}\partial_x m^{1-1/\rb}   , \quad {\cal L}=\partial_x-\partial_x^3
\eeq 
and admit 
multipeakon solutions of the form (\ref{peaks}). However, 
 $\rb=2,3$ are the only values for which there is a 
bi-Hamiltonian structure, and these 
correspond to the integrable cases,  
in the sense that the equation (\ref{bfam}) has infinitely many local symmetries 
for these values of $\rb$ alone \cite{miknov}. 
 
%according to   

Due to the discontinuous derivatives at the peaks, it is necessary to specify 
in what sense (\ref{peaks}) is a solution of (\ref{bfam}). The shape of  the peakons 
corresponds to the fact that  $\frac{1}{2} e^{-|x|}$ is the Green's function of 
the one-dimensional Helmholtz operator $1-\partial_x^2$, so for $N$ peakons the quantity 
$m$ is given by a sum of Dirac delta functions, 
\beq\label{deltas} 
m(x,t)=2\sum_{j=1}^N p_j(t) \, \delta (x-q_j(t)), 
\eeq 
with support at   each of the peak positions $x=q_j(t)$ at time $t$. Thus 
it is necessary to interpret (\ref{bfam}) as an equation for distributions. The problem 
is then how to make sense of the nonlinear terms, which include  products of distributions 
with common support. An ad hoc solution  to this problem is to interpret the product $u_xm$ as 
being  $<u_x>m$, where
\beq\label{angle} 
<f(x)> := \frac{1}{2} \lim_{\epsilon\to 0}\Big(f(x+\epsilon)+f(x-\epsilon)\Big)  
\eeq 
is the average of the left and right limits. However, a more satisfying  
solution, which turns out to yield equivalent results,  is the following weak formulation of (\ref{bfam}), 
presented in \cite{hls}: 
\beq\label{weak} 
E[u(x,t)]:=(1-\partial_x^2)u_t 
+ (\rb +1 -\partial_x^2)\partial_x\Big(\frac{1}{2}u^2\Big) + 
\partial_x\Big(\frac{3-\rb}{2}u_x^2\Big)=0.
\eeq 
With the above, $u(x,t)$ is said to be a weak solution if 
\beq \label{weakeq} 
\int E[u(x,t)] \, \phi(x) \, \rd x = 0 
\eeq for all compactly supported test functions $\phi\in C^\infty (\R)$, 
with $\partial_x$ in (\ref{weak}) being viewed as a distributional derivative, 
where it is further required that, for each fixed $t$,  $u_t$ is a 
continuous linear functional, and also $u\in H^1_{\mathrm{loc}}(\R)$, so 
that $u^2$ and $u_x^2$ define  continuous linear functionals as well.   

The preceding requirements entail that (\ref{peaks}) is a weak solution of (\ref{bfam}) 
if and only if $(q_j,p_j)_{j=1,\ldots,N}$ satisfy the system of ordinary differential 
equations (ODEs) 
\beq\label{bodes} 
\frac{\rd q_j}{\rd t} = \frac{\partial \tilde{H}}{\partial p_j}, \qquad  
\frac{\rd p_j}{\rd t} = -(b-1)\frac{\partial \tilde{H}}{\partial q_j},\qquad j=1,\ldots N,  
\eeq 
with 
\beq\label{ht} 
\tilde{H} = \frac{1}{2} \sum_{j,k=1,\ldots N} p_jp_k e^{-|q_j-q_k|}.
\eeq 
In the case $\rb=2$, the ODEs (\ref{bodes}) form a 
canonical Hamiltonian system, with $\tilde{H}$ being the Hamiltonian. 
For all other values of $\rb$,  $\tilde{H}$ is not a conserved quantity; nevertheless, 
for all $\rb$ the equations  (\ref{bodes}) are Hamiltonian with respect to a 
non-canonical Poisson bracket derived by restriction of the bracket defined 
by the operator ${\bf B}$ in (\ref{bop}) to the finite-dimensional 
submanifold of $N$-peakon solutions \cite{peakpuls}. 

The pioneering work of Camassa and Holm inspired the search for integrable 
analogues of (\ref{cheq}) with two or more components, starting with 
\cite{clz, falqui}. 
The subject of this article is the two-component %coupled 
system of PDEs given by 
\beq\label{pop} 
\begin{array}{lcl} 
m_t +(2u+v)\, m_x + 3(2u_x+v_x)\, m& = & 0, \quad m=u-u_{xx}, \\
n_t +(2u+v)\, n_x + 2(2u_x+v_x)\, n& = & 0,   \quad n=v-v_{xx},
\end{array} 
\eeq 
which was derived by Popowicz via Dirac reduction of a 
Hamiltonian operator depending on three fields \cite{pop}. With ${\bf m} =(m,n)^T$, the Hamiltonian structure 
of (\ref{pop}) is given by 
\beq\label{popham} 
 {\bf m}_t = \hat{{\bf B}}\, \frac{\delta H_0}{\delta {\bf m}},
\eeq 
with 
\beq\label{popop}  H_0=\int (m+n) \, \rd x, \qquad %\mathrm{and}\quad 
\hat{{\bf B}} = - \left(\begin{array}{cc} 9  m^{2/3}  \partial_x m^{1/3} 
{\cal L}^{-1} m^{1/3}\partial_x m^{2/3} 
& 6  m^{2/3}  \partial_x m^{1/3} 
{\cal L}^{-1} n^{1/2}\partial_x n^{1/2} 
\\ 
6  n^{1/2}  \partial_x n^{1/2} 
{\cal L}^{-1} m^{1/3}\partial_x m^{2/3} 
& 4  n^{1/2}  \partial_x n^{1/2} 
{\cal L}^{-1} n^{1/2}\partial_x n^{1/2} 
\end{array}\right)   . 
\eeq 
The Hamiltonian  operator  $\hat{{\bf B}}$ admits two 
one-parameter families of Casimir functionals, given by 
$$ 
\begin{array}{rcl} 
H_1 & = & \int (nm^{-2/3})^\la m^{1/3}\, \rd x, \\ 
H_2 & = &  \int (nm^{-2/3})^\la (-9n_x^2n^{-2}m^{1/3} 
+ 12 n_xm_xn^{-1}m^{-4/3} -4 m_x^2 m^{-7/3})\, \rd x , 
\end{array} 
$$ 
where $\la$ is arbitrary. The system (\ref{pop}) is a coupling between the Camassa-Holm 
and Degasperis-Procesi equations, that is the cases $\rb=2,3$ of (\ref{bfam}), to which it 
reduces when $u=0$, $v=0$, respectively, and this led Popowicz to speculate that it 
should be integrable. However, a combination of a reciprocal transformation together with Painlev\'e 
analysis, applied by one of us in work with Irle \cite{hi}, provides strong evidence of the non-integrability 
of the coupled system    (\ref{pop}). 

Despite its apparent non-integrability, it is nevertheless the case that the %system %(\ref{pop}) 
Popowicz system admits multipeakon solutions, given by the ansatz 
\beq\label{popeak}
u(x,t) = \sum_{j=1}^N a_j(t)\, e^{-|x-q_j(t)|}, 
\qquad 
v(x,t) = \sum_{j=1}^N b_j(t)\, e^{-|x-q_j(t)|}, 
\eeq 
whose properties were outlined in 
\cite{hi}. The purpose of this article is to describe %give a 
more precisely in what sense % 
these are distributional solutions of (\ref{pop}), and provide some details %description of 
of the dynamics of the peakons, which behave somewhat differently from those that appear in 
the Camassa-Holm 
and Degasperis-Procesi equations. Due to the Hamiltonian properties of the solutions 
(\ref{popeak}), which are inherited from those of the PDE system, we refer to them as conservative 
peakons, following \cite{jacekxiangke}, where peakons with analogous properties were considered 
for a family of peakon equations derived from the bi-Hamiltonian structure of the nonlinear Schr\"odinger hierarchy.

\section{Conservative peakons} 

\setcounter{equation}{0}

The first thing to observe about the system (\ref{pop}) is that it does not admit a weak formulation 
suitable for  multipeakon solutions of the form (\ref{popeak}),   
analogous to the formulation (\ref{weak}) for the b-family of equations. If the two coupled %Popowicz 
equations  
are   denoted by $E_j[u(x,t),v(x,t)]=0$, $j=1,2$, 
then a bona fide weak solution should be one for which 
$$ \int (E_1[u(x,t),v(x,t)] \, \phi_1(x)+ E_2[u(x,t),v(x,t)] \, \phi_2(x)) \, \rd x = 0, $$
for an arbitrary pair of compactly supported test functions $\phi_1,\phi_2\in C^\infty(\R)$, 
with sufficiently many derivatives in $E_1$, $E_2$ being interpreted as distributional derivatives. 
For peakons, products of $u,v$ and $u_x,v_x$ define continuous linear functionals, 
and all higher derivatives should be viewed in the sense of distributions. If 
we start by considering $E_1$, then we can use (\ref{weak}) to write this as 
$$ 
E_1=(1-\partial_x^2)
u_t +
2 (4-\partial_x^2)\partial_x \Big(\frac{1}{2}u^2\Big)+ R, 
$$ 
where 
\beq\label{rdef} R=(u_x-u_{xxx}) v +3(u-u_{xx})v_x.\eeq  
There are four mixed products of $u,v$ and their first derivatives, so we need  
to be able to write the terms in $R$ with a total of three $x$ derivatives  in the form 
\beq \label{rdes} 
A\partial_x^3(uv)+ \partial_x^2(Buv_x+Cu_x v) + D\partial_x(u_x v_x), 
\eeq 
for some constants $A,B,C,D$, where $\partial_x$ above 
should be regarded as a distributional derivative. 
(Strictly speaking, isolated products $uv_x$ and $u_xv$ have  discontinuities in the case of 
peakons, but we write these terms separately for the sake of completeness.)  
Upon comparing the coefficients of $u_{xxx}v$ and $u_{xx}v_x$ (and 
the absent terms $u_xv_{xx}$, $uv_{xxx}$) in 
(\ref{rdef}) with (\ref{rdes}), we find  the linear system 
$$ 
\begin{array}{rcl} 
A+C & = & -1, \\ 
3A+B+2C+D& = & -3, \\ 
3A +2B+C+D & = & 0, \\ 
A+B & = & 0, 
\end{array} 
$$ 
which has no solution, so there can be no weak formulation suitable for peakons. 

Despite the fact that the Popowicz system does not admit a weak formulation for multipeakons, 
one can select a distributional interpretation, % it as a pair of equations for , 
by using the average (\ref{angle}), 
in such a way that the Hamiltonian properties of the PDE system are inherited by these solutions. By 
taking the common prefactor in $(2u_x+v_x)m$ and $(2u_x+v_x)n$ to mean the average $<2u_x+v_x>$, 
one finds a system of equations for distributions, namely  
\beq\label{popd} 
\begin{array}{lcl} 
m_t +\partial_x \Big((2u+v)\, m\Big) + 2<2u_x+v_x>\, m& = & 0,  \\
n_t +\partial_x \Big((2u+v)\, n\Big) + <2u_x+v_x>\, n& = & 0,   
\end{array} 
\eeq 
with $\partial_x$ being the distributional derivative. For peakons, the main upshot of this averaging procedure is that, 
to include the situation where it appears in front of a delta function with the same support, 
the derivative of $e^{-|x|}$ should interpreted as $-\sgn(x)e^{-|x|}$, where the signum function is defined by 
$$ 
\sgn(x) =\begin{cases} 1, &\text{if  $x>0$}; \\
0, & \text{if $x=0$} ; \\ 
-1, &\text{if $x<0$.} 
\end{cases} 
$$   
 The quantities $m$, $n$ are defined as above, so that for 
multipeakons of the form   (\ref{popeak})  they are given by 
\beq\label{mndelta} 
m(x,t) = 2\sum_{j=1}^N a_j(t) \, \delta(x-q_j(t)), \qquad 
n(x,t) = 2\sum_{j=1}^N b_j(t) \, \delta(x-q_j(t)),
\eeq 
Making use of the nomenclature from  \cite{jacekxiangke}, it is appropriate to refer to the multipeakons  which satisfy (\ref{popd}), in the sense of distributions, as 
conservative peakons, since (in particular) the Hamiltonian functional $H_0$ is conserved by these 
solutions.  Furthermore, by rewriting the first order differential operators appearing in  (\ref{popop}) as
$$
 m^{2/3}\partial_x m^{1/3}=m \partial_x +\frac{m_x}{3}, \quad 
 m^{1/3}\partial_x m^{2/3}=m \partial_x +\frac{2m_x}{3}, \quad
 n^{1/2}\partial_x n^{1/2}=n \partial_x +\frac{n_x}{2},
$$ 
to remove the fractional powers (which do not make sense 
for distributions),  the Poisson structure defined by
$\hat{{\bf B}}$ can be reduced to these multipeakon solutions. The following result was stated without proof in \cite{hi}. 

\begin{theorem}\label{maint} With the formulation (\ref{popd}), the Popowicz system admits $N$-peakon 
solutions of the form (\ref{popeak}), where 
the amplitudes $a_j$, $b_j$ and positions $q_j$ satisfy the dynamical system 
\beq\label{peakoneq} 
\begin{array}{rcl} 
\dot{a}_j & = & 2a_j\sum_{k=1}^N (2a_k +b_k)\sgn(q_j-q_k)e^{-|q_j-q_k|}, \\  
\dot{b}_j & = & b_j\sum_{k=1}^N (2a_k +b_k)\sgn(q_j-q_k)e^{-|q_j-q_k|}, \\  
\dot{q}_j & = & \sum_{k=1}^N (2a_k +b_k)e^{-|q_j-q_k|}, \\  
\end{array}  
\eeq
for $j = 1, \ldots ,N$. These equations are in Hamiltonian form, that is 
$$ \dot{a}_j = \{ \, a_j, h \, \}, \qquad \dot{b}_j = \{ \, b_j, h \, \}, \qquad\dot{q}_j = \{ \, q_j, h \, \}, $$ 
with the Hamiltonian 
\beq\label{h} 
h=2\sum_{j=1}^N (a_j + b_j)  
\eeq 
and the Poisson bracket
\beq\label{pbr} 
\begin{array}{rcl} 
\{ \, a_j, a_k \, \} & = & 2a_ja_k \,\sgn (q_j-q_k)  e^{-|q_j-q_k|}, \\ 
\{ \, b_j, b_k \, \} & = & \frac{1}{2} b_jb_k \,\sgn (q_j-q_k)  e^{-|q_j-q_k|}, \\ 
\{ \, q_j, q_k \, \} & = & \frac{1}{2} \sgn (q_j-q_k) (1-  e^{-|q_j-q_k|}), \\ 
\{ \, q_j, a_k \, \} & = & a_k  \, e^{-|q_j-q_k|}, \\ 
\{ \, q_j, b_k \, \} & = & \frac{1}{2}b_k\, e^{-|q_j-q_k|}, \\ 
\{ \, a_j, b_k \, \} & = & a_jb_k \,\sgn (q_j-q_k)  e^{-|q_j-q_k|},   
\end{array}  
\eeq 
where the latter has $N$ Casimirs given by 
\beq\label{cas} 
C_j = \frac{a_j}{b_j^2},  \qquad \text{for $b_j\neq 0$}, \qquad j=1,\ldots, N. 
\eeq 
\end{theorem}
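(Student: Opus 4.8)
The plan is to substitute the peakon ansatz (\ref{popeak}) directly into the distributional equations (\ref{popd}) and read off the dynamical system, then verify independently that this system is generated by the stated Hamiltonian and bracket. First I would compute, for the ansatz (\ref{popeak}), the quantities $m$ and $n$: convolving with the Green's function shows that $m,n$ are the delta-function sums (\ref{mndelta}). I would then evaluate $u$, $v$, and $\langle u_x\rangle$, $\langle v_x\rangle$ at the peak positions $x = q_j$, using the prescription that the derivative of $e^{-|x|}$ is $-\sgn(x)e^{-|x|}$ with $\sgn(0)=0$; this gives $u(q_j) = \sum_k a_k e^{-|q_j-q_k|}$, $\langle u_x\rangle(q_j) = -\sum_k a_k \sgn(q_j-q_k)e^{-|q_j-q_k|}$, and similarly for $v$. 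Plugging $m = 2\sum_j a_j\delta(x-q_j)$ into the first equation of (\ref{popd}): the time derivative produces terms $2\dot a_j\delta(x-q_j) - 2a_j\dot q_j\,\delta'(x-q_j)$, the term $\partial_x((2u+v)m)$ produces $2a_j(2u+v)(q_j)\,\delta'(x-q_j)$ plus a $\delta(x-q_j)$ contribution involving $\langle(2u+v)_x\rangle(q_j)$ (here the averaging is exactly what makes the product of the discontinuous factor with the delta well defined), and the last term contributes $2\langle 2u_x+v_x\rangle(q_j)\cdot 2a_j\,\delta(x-q_j)$. Matching the coefficients of $\delta'(x-q_j)$ gives $\dot q_j = (2u+v)(q_j)$, i.e. the third equation of (\ref{peakoneq}); matching the coefficients of $\delta(x-q_j)$ and using this gives the $\dot a_j$ equation. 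The second equation of (\ref{popd}) handled the same way yields $\dot b_j$, consistent with the same $\dot q_j$. This establishes the first assertion of the theorem.

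For the Hamiltonian formulation, I would take the bracket (\ref{pbr}) and the Hamiltonian (\ref{h}) as given and simply compute $\{a_j,h\}$, $\{b_j,h\}$, $\{q_j,h\}$. Since $h = 2\sum_i(a_i+b_i)$, linearity of the bracket reduces each computation to a sum over $i$ of the listed brackets; for instance $\{q_j,h\} = 2\sum_i(\{q_j,a_i\}+\{q_j,b_i\}) = 2\sum_i(a_i + \tfrac12 b_i)e^{-|q_j-q_i|} = \sum_i(2a_i+b_i)e^{-|q_j-q_i|}$, matching $\dot q_j$. The computations for $\dot a_j$ and $\dot b_j$ are identical in spirit, using the $\{a_j,a_k\}$, $\{a_j,b_k\}$, $\{b_j,b_k\}$ entries and noting that the $\sgn(q_j-q_j)=0$ convention removes the diagonal terms. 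The Casimir claim is checked by verifying $\{C_j,\cdot\}=0$ against each generator, which for $C_j = a_j b_j^{-2}$ reduces to showing $b_j^{-2}\{a_j,\cdot\} - 2a_j b_j^{-3}\{b_j,\cdot\} = 0$; for the $q_k$ bracket this reads $b_j^{-2} a_j e^{-|q_k-q_j|} - 2a_j b_j^{-3}\cdot\tfrac12 b_j e^{-|q_k-q_j|}=0$, and the $a_k$ and $b_k$ brackets are checked the same way, each vanishing by the matching powers of $2$ built into (\ref{pbr}).

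The one genuine subtlety — and the step I expect to be the main obstacle — is making the distributional calculation rigorous at the coincidence of supports, i.e. justifying the manipulations $\langle u_x\rangle\,\delta$ and $\partial_x(u\,\delta')$ in (\ref{popd}) when evaluated on the peakon profile, and confirming that the averaging prescription (\ref{angle}) combined with $\sgn(0)=0$ produces a consistent (overdetermined-looking but actually solvable) matching of the coefficients of $\delta(x-q_j)$ and $\delta'(x-q_j)$. Concretely one must check that the coefficient of $\delta'(x-q_j)$ in $\partial_x((2u+v)m) + m_t$ depends only on the value $(2u+v)(q_j)$ (which is continuous, hence unambiguous) while the coefficient of $\delta(x-q_j)$ absorbs all the $\sgn$-weighted terms through the average — and that the resulting two scalar conditions per index $j$ are compatible, which is precisely why the ad hoc averaging works here even though no true weak formulation exists. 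Everything else is bookkeeping: expanding sums, separating the $k=j$ term, and comparing with (\ref{peakoneq}).
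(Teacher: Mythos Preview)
Your derivation of the ODEs (\ref{peakoneq}) is essentially the paper's: reading off the coefficients of $\delta(x-q_j)$ and $\delta'(x-q_j)$ is equivalent to the paper's procedure of integrating (\ref{popd}) against test functions with $\phi_1(q_j)=1$, $\phi_1'(q_j)=0$ and $\phi_2(q_j)=0$, $\phi_2'(q_j)=1$ supported near $x=q_j$. Your discussion of the averaging subtlety is on point and matches the paper's intent.

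The Hamiltonian part takes a genuinely different route. You take the bracket (\ref{pbr}) as given and verify directly that $\{\,\cdot\,,h\}$ reproduces (\ref{peakoneq}) and that each $C_j$ Poisson-commutes with all generators; this is self-contained and requires no input from the PDE. The paper instead \emph{derives} (\ref{pbr}) by reducing the local field brackets $\{m(x),m(y)\}$, $\{m(x),n(y)\}$, $\{n(x),n(y)\}$ defined by the operator $\hat{\bf B}$ in (\ref{popop}) to the peakon submanifold (\ref{mndelta}), following the method of \cite{peakpuls}, and obtains $h$ as the restriction of $H_0$. What the paper's route buys is an explanation of where the bracket comes from and, crucially, a reason for the Jacobi identity to hold (inherited from the Hamiltonian property of $\hat{\bf B}$). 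Your proposal never addresses Jacobi: you confirm that the bracket and $h$ generate the correct flow and that the $C_j$ are central, but the theorem asserts that (\ref{pbr}) is a \emph{Poisson} bracket, and for that you need either a direct verification of Jacobi (routine but lengthy, with many case splits on the signs of $q_i-q_j$) or the reduction argument. That is the one real gap in your plan; everything else is sound.
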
 
\begin{proof} By substituting (\ref{popeak}) and (\ref{mndelta}) into  (\ref{popd}) and 
integrating against test functions $\phi_1$, $\phi_2$ with support in a small neighbourhood of 
$x=q_j$, such that $\phi_1(q_j)=1$,  $\phi_1'(q_j)=0$ and $\phi_2(q_j)=0$,  $\phi_2'(q_j)=1$, 
one obtains the equations 
$$ 
\dot{a}_j +2a_j\Big(<2u_x(q_j)>+<v_x(q_j)>\Big)=0, \qquad 
\dot{b}_j +b_j\Big(<2u_x(q_j)>+<v_x(q_j)>\Big)=0, 
$$
$$
\dot{q}_j -\Big(2u(q_j)+v(q_j)\Big)=0,    
$$ 
which yield (\ref{peakoneq}). The Hamiltonian (\ref{h}) is obtained 
by inserting (\ref{mndelta}) into the functional $H_0$ and integrating. 
The derivation of the Poisson brackets (\ref{pbr}) follows the same steps 
as applied to the case of the b-family peakons in \cite{peakpuls}: one 
starts from the expressions for local brackets between fields,  
defined by the Hamiltonian operator $\hat{{\bf B}}$ for the PDE,  
as in (\ref{popop}).   
For instance, with $G(x)=\frac{1}{2}\sgn(x)(1-e^{-|x|})$ one has 
$$ 
\{ \, m(x), m(y)\, \} = m_x(x)m_x(y)G(x-y)
+3(m(x)m_x(y) - m_x(x)m(y))G′(x-y) - 9m(x)m(y)G′′(x-y),
$$  
and then substituting in (\ref{mndelta}) on both sides and  integrating against pairs of  test functions 
of $x$ and $y$,
of the same of form as  
$\phi_1$, $\phi_2$ above, with support at $x=q_j$ and $y=q_k$ for each pair $j,k$, produces 
the brackets $\{ \, a_j, a_k \, \} $, $\{ \, q_j, a_k \, \}$ and $\{ \, q_j, q_k \, \}$, 
while the other brackets are derived from the 
 expressions for $\{ \, m(x), n(y)\, \} $ and $\{ \, n(x), n(y)\, \} $ given in \cite{hi}; 
further details of this calculation can be found in \cite{irle}. 
It is straightforward to check that, away from where the $b_j$ vanish, each $C_j$ is a
Casimir for the bracket specified by (\ref{pbr}), and this is a complete set 
of Casimirs since the bracket has rank $2N$. 
\end{proof} 

In \cite{clp} it was shown that, in addition to the quantity $\tilde{h}=\sum_{j=1}^N p_j$ (which, 
up to rescaling, corresponds to the restriction  of the functional $H$ in (\ref{bop}) to the multipeakon solutions), 
the equations (\ref{bodes}) for the peakons in the b-family, ordered so that $q_j<q_{j+1}$ for 
all $j$,  admit the first integral 
$$ 
P = \left(\prod_{j=1}^Np_j\right) \prod_{k=1}^{N-1}(1-e^{-|q_k-q_{k+1}|})^{\rb-1}.
$$ 
An analogous result holds for the peakons in the Popowicz system, as was noted in \cite{hi} in the case $N=2$. 

\bl\label{j1st} 
In addition to the Hamiltonian $h$ and the Casimirs $C_j$, $j=1,\ldots,N$, the ODEs 
(\ref{peakoneq}) for the 
peakons in the Popowicz system admit the first integral 
\beq\label{jint} 
J =  \left(\prod_{j=1}^Nb_j\right) \prod_{k=1}^{N-1}\left(1-e^{-|q_k-q_{k+1}|}\right), 
\eeq 
so that $\{\,h,J\,\}=0$, 
where the peaks are ordered as follows: 
\beq\label{ordering} 
q_1<q_2<\cdots < q_N.
\eeq
\el
\begin{proof}
Taking the logarithm of (\ref{jint}) and differentiating gives 
$$ 
\frac{\rd}{\rd t} \log J = \sum_{j=1}^N \frac{\rd}{\rd t} \log b_j +\sum_{k=1}^{N-1} \frac{(\dot{q}_k - \dot{q}_{k+1})\sgn(q_k-q_{k+1}) E_{k,k+1}}{1-E_{k,k+1}}, 
$$  
where we have introduced the convenient notation 
$$ 
E_{j,k}=E_{k,j}=e^{-|q_j-q_k|}. 
$$ 
Substituting for the time derivatives from (\ref{peakoneq}) yields 
$$ 
\begin{array}{rcl} 
\frac{\rd}{\rd t} \log J & = & \sum_{j,k=1}^N (2a_k + b_k)\sgn(q_j-q_k) E_{jk} +\sum_{k=1}^{N-1}\sum_{\ell=1}^N (2a_\ell + b_\ell)\sgn(q_k-q_{k+1}) \frac{ (E_{k,\ell}-E_{k+1,\ell}) E_{k,k+1}}{1-E_{k,k+1}} \\
 & = & \sum_{k=1}^N (2a_k + b_k)\, S_k, 
\end{array}  
$$
where,  with the ordering (\ref{ordering}), %$q_1<q_2<\cdots < q_N$  
$$ 
S_k =-\sum_{j=1}^{k-1} E_{jk} + \sum_{j=k+1}^{N} E_{jk} - \sum_{\ell=1}^{N-1} \frac{(E_{\ell,k}-E_{\ell+1,k})E_{\ell,\ell+1}}{1-E_{\ell,\ell+1}}. 
$$  
Then the properties of the exponential, together with the assumed ordering of the peakons, produce the identity  
$$ 
\frac{(E_{\ell,k}-E_{\ell+1,k})E_{\ell,\ell+1}}{1-E_{\ell,\ell+1}} =\begin{cases} 
-E_{\ell,k}, &\text{for  $1\leq\ell\leq k$}; \\
E_{\ell+1,k}, &\text{for $k\leq \ell\leq N-1$.} 
\end{cases} 
$$ 
Thus $S_k=0$ for all $k$, and the result follows. 
Note that $\rd f/\rd t = \{\, f,h\,\}$ for any function on phase space, 
so $J$ Poisson commutes with $h$.
\end{proof} 

In the case of a single peakon ($N=1$), the ODE system  (\ref{peakoneq}) is trivially integrable, 
and the fields $u$, $v$ take the form 
\beq\label{N1}
u(x,t) = a e^{-|x-ct-x_0|}, \qquad  v(x,t) = b e^{-|x-ct-x_0|}, \qquad \mathrm{with} \quad c=2a+b, 
\eeq 
where $a,b,x_0$ are arbitrary constants. 
For the case $N=2$, upon  restricting to four-dimensional symplectic leaves $C_1=$ const, $C_2=$ const, 
by the preceding result   there remain 
the two independent first integrals $h$, $J$ with $\{\,h,J\,\}=0$, hence we have the following 

\begin{corollary}\label{N2} 
The Hamiltonian  system  (\ref{peakoneq})  is Liouville  integrable for $N=1$ and $N=2$. 
\end{corollary}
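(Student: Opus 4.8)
The plan is to extract Liouville integrability directly from the Poisson-geometric data already assembled. The bracket (\ref{pbr}) lives on the $3N$-dimensional phase space with coordinates $(a_j,b_j,q_j)_{j=1,\dots,N}$ and, by Theorem~\ref{maint}, it has rank $2N$, with the $N$ functionally independent Casimirs (\ref{cas}) forming a complete set. Its generic symplectic leaves are therefore the $2N$-dimensional common level sets $\mathcal{L}_{\mathbf c}=\{C_1=c_1,\dots,C_N=c_N\}$, $\mathbf c\in\R^N$, and Liouville integrability amounts to producing $N$ integrals of (\ref{peakoneq}) that Poisson commute and whose restrictions to a generic such leaf are functionally independent.

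For $N=1$ the leaf $\mathcal{L}_{c_1}$ is two-dimensional, and the Hamiltonian $h=2(a_1+b_1)$ alone supplies the single required integral; indeed the $N=1$ flow is solved in closed form by (\ref{N1}), with $a_1,b_1$ constant and $q_1$ affine in $t$, so this case is immediate.

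For $N=2$ the leaf $\mathcal{L}_{c_1,c_2}$ is four-dimensional, so two independent commuting integrals are needed, and the natural candidates are $h$ from (\ref{h}) and $J$ from (\ref{jint}): they Poisson commute by Lemma~\ref{j1st}, and since $C_1,C_2$ are Casimirs both descend to functions on $\mathcal{L}_{c_1,c_2}$. To verify independence I would fix the ordering $q_1<q_2$, parametrise $\mathcal{L}_{c_1,c_2}$ by $(b_1,b_2,q_1,q_2)$ via $a_i=c_ib_i^2$, and abbreviate $E=e^{-(q_2-q_1)}\in(0,1)$, so that
\beq
h\big|_{\mathcal{L}_{c_1,c_2}}=2\big(c_1b_1^2+b_1+c_2b_2^2+b_2\big),\qquad J\big|_{\mathcal{L}_{c_1,c_2}}=b_1b_2\,(1-E).
\eeq
A short computation then shows that the $2\times2$ minor of the Jacobian of $(h,J)$ in the coordinates $(b_1,q_1)$ equals $-2\,b_1b_2\,E\,(2c_1b_1+1)$, and more generally that $\rd h\wedge\rd J$ vanishes on $\mathcal{L}_{c_1,c_2}$ only on the codimension-two locus $2c_1b_1+1=2c_2b_2+1=0$; since $2c_ib_i+1=(2a_i+b_i)/b_i$, this is exactly where $2a_i+b_i=0$ for both $i$, which by (\ref{peakoneq}) is a set of equilibria. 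Hence $h$ and $J$ are functionally independent on an open dense subset of every leaf, and the system is Liouville integrable for $N=2$.

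The only genuine computation is the Jacobian above, and it is routine; the one point needing care is that the Casimirs (\ref{cas}), and hence the leaf parametrisation, presuppose $b_j\neq0$, so the statement is to be understood on the open dense region of phase space where no $b_j$ vanishes — equivalently, one checks that $\rd C_1,\rd C_2,\rd h,\rd J$ are linearly independent there. As is standard in this context, ``Liouville integrable'' here means the existence of this complete involutive set of first integrals; the resulting Arnold--Liouville structure of the common level sets, and in particular the explicit integration of the two-peakon flow, is carried out in the next section.
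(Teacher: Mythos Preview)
Your proof is correct and follows essentially the same approach as the paper: restrict to the $2N$-dimensional symplectic leaves cut out by the Casimirs $C_j$, and then use $h$ (together with $J$ when $N=2$) as the required Poisson-commuting integrals, the commutation coming from Lemma~\ref{j1st}. The paper simply asserts that $h$ and $J$ are independent on the leaves without checking it, whereas you supply the explicit Jacobian computation; this is a welcome addition of rigour but not a different route.
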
    

For the b-family, in each of the special cases $\rb=2,3$ there is a linear system (Lax pair) which can be used to 
construct $N$ independent first integrals for the corresponding ODE system (\ref{bodes}) \cite{ch, dhh}, and 
can be further employed to develop a spectral theory for the peakons, leading to an 
explicit solution for all $N$ \cite{bss, ls}. However, for $N>2$ there is no reason to expect that 
the system (\ref{peakoneq}) has any first integrals other than $h$, $J$ and the $N$ Casimirs. 
Thus, while Liouville's theorem guarantees that the solution for $N=2$ can be found by quadratures, 
which we explicitly derive in the next section, for larger $N$ this does not seem possible. 

%, since same is tr since in that case  the solution is gven by   have the form

\section{Explicit dynamics of two peakons} 
\setcounter{equation}{0}

For arbitrary $N$, the $3N$-dimensional system (\ref{peakoneq}) can always be reduced to $2N$-dimensional 
symplectic leaves by fixing the values of the $N$ Casimirs  $C_j$ (away from $b_j=0$); in particular, one 
can eliminate the variables $a_j$ to leave $2N$ equations for $(q_i,b_i)_{i=1,\ldots,N}$. 
Here we consider the Liouville integrable case $N=2$, for which 
there are the two first integrals 
\beq\label{1stint} 
h=2(a_1+a_2+b_1+b_2), \qquad J=b_1b_2(1-e^{-|q_1-q_2|}), 
\eeq 
in addition to the two Casimirs $C_1$, $C_2$,  and show how to explicitly integrate the equations of 
motion. For the sake of concreteness, we restrict to the situation where $a_1,a_2>0$, so 
that the values of the Casimirs  are positive, and fix these to be constant values: 
\beq\label{casik} 
C_i=k_i^2, \qquad k_i>0, \qquad i=1,2. 
\eeq 
We further assume that $b_1,b_2>0$ (at least at time $t=0$),  meaning that 
in this case both fields $u$ and $v$ initially consist of 
peakons, with positive amplitudes (rather than anti-peakons, with negative amplitudes); as we 
shall see, this implies that the amplitudes $b_i$ remain positive for all time. 
Thus we can reduce the solution of (\ref{peakoneq}) for $N=2$ to solving the system 
\beq\label{bq} 
\begin{array}{rcl} 
\dot{b}_1 &=& b_1b_2(2k_2^2b_2+1)\sgn(q_1-q_2)e^{-|q_1-q_2|}, \\ 
\dot{b}_2&=& b_1b_2(2k_1^2b_1+1)\sgn(q_2-q_1)e^{-|q_1-q_2|}, \\ 
\dot{q}_1 &=& b_1(2k_1^2b_1+1) + b_2(2k_2^2b_2+1)e^{-|q_1-q_2|}, \\ 
\dot{q}_2 &=& b_2(2k_2^2b_2+1) + b_1(2k_1^2b_1+1)e^{-|q_1-q_2|}, 
\end{array}   
\eeq 
and then the amplitudes $a_i$ are determined by $a_i=k_i^2b_i^2$, $i=1,2$. 

Following  \cite{ch}, it is convenient to set 
$$
q=q_1-q_2, \qquad Q=q_1+q_2, 
$$ 
and we will assume that the peaks are ordered so that 
\beq\label{order} 
q_1<q_2\implies q<0; 
\eeq 
if the latter condition holds initially, then it  will continue to hold as long as the peakons do not 
%collide 
overlap (this  possibility will be considered in due course). 
In that case, the system (\ref{bq}) is equivalent to 
\beq\label{Qq} 
\begin{array}{rcl} 
\dot{b}_1 &=& -b_1b_2(2k_2^2b_2+1)e^{q}, \\ 
\dot{b}_2&=& b_1b_2(2k_1^2b_1+1)e^{q}, \\ 
\dot{q} &=&\Big( b_1(2k_1^2b_1+1) - b_2(2k_2^2b_2+1)\Big) (1-e^{q}), \\ 
\dot{Q} &=& \Big( b_1(2k_1^2b_1+1) + b_2(2k_2^2b_2+1)\Big) (1+e^{q}). 
\end{array}   
\eeq 
In order to integrate the above equations explicitly, it is useful to note that, substituting for 
$a_1,a_2$ in terms of $b_1,b_2$ and 
fixing the Hamiltonian $h$ to a constant value defines an ellipse in the $(b_1,b_2)$ 
plane, i.e.\
\beq\label{ellipse} 
h = 2b_1(k_1^2b_1+1) + 2b_2(k_2^2b_2+1)=\mathrm{const}, 
\eeq 
which can be specified parametrically in terms of an angle 
$\theta \in (-\pi,\pi]$, so that $b_1$ and $b_2$ are given by 
\beq\label{b12theta}
b_1 = \frac{\la}{k_1} \sin \theta - \frac{1}{2k_1^2}, \qquad 
 b_2 = \frac{\la}{k_2} \cos \theta - \frac{1}{2k_2^2}, \qquad \mathrm{where} \quad
\la^2 = \frac{h}{2}+\frac{1}{4k_1^2} + \frac{1}{4k_2^2}, \eeq 
with $\theta$  measured clockwise from the vertical, that is, the positive $b_2$ axis. 

\begin{lemma} \label{pos} 
If the initial amplitudes $b_1(0)$, $b_2(0)$ are positive, then they remain positive for all time $t$, 
and the peakons do not overlap. 
\end{lemma}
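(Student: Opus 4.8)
The plan is to exploit the conserved quantities and the sign structure of the equations of motion rather than attempt to integrate anything. Since $C_1 = a_1/b_1^2 = k_1^2 > 0$ and $C_2 = k_2^2 > 0$ are fixed, the relation $a_i = k_i^2 b_i^2$ holds for all time, so both $a_i$ are automatically nonnegative and in fact positive whenever $b_i \neq 0$. The key is therefore to prevent the $b_i$ from reaching zero. Here I would use the first integral $J = b_1 b_2\,(1 - e^{-|q_1-q_2|})$ from Lemma~\ref{j1st}, together with the Hamiltonian $h = 2(a_1+a_2+b_1+b_2) = 2b_1(k_1^2 b_1 + 1) + 2b_2(k_2^2 b_2 + 1)$, which confines $(b_1,b_2)$ to the ellipse~(\ref{ellipse}) in the first quadrant-relevant region.

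First I would argue by continuity: suppose for contradiction that some $b_i$ first vanishes at a finite time $t^* > 0$, with $b_1, b_2 > 0$ on $[0,t^*)$. On that interval the peaks cannot collide either — if $q_1 - q_2 \to 0$ then $1 - e^{-|q_1-q_2|} \to 0$, and since $b_1 b_2$ stays bounded above (being constrained to the bounded ellipse~(\ref{ellipse})), we would get $J \to 0$, contradicting that $J$ is a nonzero constant (nonzero because $b_1(0), b_2(0) > 0$ and the peaks are initially distinct). So on $[0,t^*)$ the ordering~(\ref{order}) persists and $q = q_1 - q_2 < 0$, hence $0 < e^{q} < 1$ and $0 < 1 - e^{q} < 1$; thus the right-hand sides of~(\ref{Qq}) are smooth and the solution extends to a neighbourhood of $t^*$. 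As $t \to t^*$, by the constancy of $J$ we must have $b_1(t^*) b_2(t^*) (1 - e^{q(t^*)}) = J > 0$, which is impossible if either $b_i(t^*) = 0$. This contradiction shows no $b_i$ can vanish in finite time, and simultaneously that the peakons never overlap.

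For the degenerate possibility $q \to -\infty$ (peakons escaping to infinite separation), $e^q \to 0$ and again $J \to b_1 b_2$, so this merely forces $b_1 b_2 \to J > 0$, consistent with both staying positive; no contradiction but also no violation. It remains to confirm the solution exists for all $t \in \mathbb{R}$ and not just up to some maximal time: since $(b_1, b_2)$ lies on the compact ellipse and $q \in (-\infty, 0)$ with $\dot q$ bounded (all terms in $\dot q$ are products of bounded quantities with $1 - e^q \in (0,1)$), there is no finite-time blow-up, so the flow is complete. The main obstacle, and the one point needing genuine care, is ruling out collision: one must check that $b_1 b_2$ genuinely stays bounded \emph{away from below} as well — but this is exactly what constancy of $J$ with $1 - e^{-|q_1-q_2|} < 1$ delivers, namely $b_1 b_2 \geq J > 0$ throughout. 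I would present this bound explicitly, as it is the linchpin of both conclusions.
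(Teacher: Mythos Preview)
Your proposal is correct and follows essentially the same approach as the paper: both arguments use the boundedness of $(b_1,b_2)$ on the ellipse $h=\mathrm{const}$ together with the constancy of $J>0$ to rule out overlap (since $q\to 0$ with $b_1b_2$ bounded would force $J\to 0$), and then the inequality $b_1b_2 \geq J$ coming from $1-e^{q}<1$ to keep both amplitudes strictly positive. Your version is more explicit about the contradiction structure and about global existence of the flow, but the key ideas are identical to the paper's proof.
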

\begin{proof} Since $(b_1(t),b_2(t))$ lies on the ellipse $h=$ const defined by (\ref{ellipse}), 
the amplitudes $b_j$ are bounded  for all $t$. From the formula for $J$ in  (\ref{1stint}),  
the assumption $q(0)=q_1(0)-q_2(0)<0$ implies $J>0$. However, $q(t)=0$ for some $t$ would imply 
$J=0$, contradicting the fact that $J$ is a first integral, so the two peaks cannot overlap and 
$q(t)<0$ for all $t$. Thus 
\beq\label{jform}  
J=b_1b_2(1-e^q)<b_1b_2, 
\eeq    
so $(b_1(t),b_2(t))\in\R^2$ lies in the positive quadrant above the upper branch of the hyperbola $b_1b_2=J$.  
\end{proof} 

\begin{rema} Adapting an argument used in \cite{clp}, Proposition 2.4,  
in the case where the amplitudes are positive we may write 
$$ 
-hb_1< -(2k_2^2b_2^2+b_2)b_1\leq \dot{b}_1 \leq 
(2k_2^2b_2^2+b_2)b_1<hb_1, 
$$
and similarly for $b_2$, so that  both an upper and a lower bound  is obtained 
 for $b_j(t)$, namely 
$$  
b_j(0)\, e^{-ht} \leq b_j(t) \leq b_j(0)\, e^{ht}, \qquad j=1,2
$$ 
holds for all $t\geq 0$, by 
Gronwall's inequality. 
%However, despite what is stated for  
%the b-family in \cite{clp}, 
%it is not clear that this argument can be used a priori to assert 
%that initially positive amplitudes must remain so for all time. 
\end{rema}

The first two equations in (\ref{Qq}) both yield the same equation for the time derivative of 
$\theta$, that is  
\beq\label{thetacalc}
\dot{\theta} = -2k_1k_2 b_1b_2e^q = 2k_1k_2(J-b_1b_2), 
\eeq 
%and upon 
(where the second equality comes from fixing the first integral $J=$ const, as in (\ref{1stint}), to eliminate $q=q_1-q_2$).  
%this becomes
Then replacing $b_1,b_2$ with their parametric forms (\ref{b12theta}) in terms of $\theta$ leads 
to an autonomous equation for $\theta$ alone, namely 
\beq\label{thdot} 
\dot{\theta}= 2\Big( 
Jk_1k_2 - \frac{1}{4k_1k_2} +\frac{\lambda}{2k_1} \cos\theta +\frac{\lambda}{2k_2} \sin\theta- \lambda^2 \sin\theta\cos\theta \Big) \equiv f(\theta) 
\eeq  
At this stage we have already shown that   the $N=2$ peakon equations can be 
completely reduced to quadratures (as 
is guaranteed by Liouville's theorem \cite{arnold}). To see this, observe that by performing the quadrature  
$$ 
\int \frac{d\theta}{f(\theta)} = t +\mathrm{const} 
$$ 
 we obtain 
$\theta=\theta(t)$  from (\ref{thdot}),  and then $b_1,b_2$ are specified as functions of $t$ by (\ref{b12theta}); 
hence $q(t)$ is found from 
(\ref{jform}), 
%the formula for $J$, 
so that  
\beq\label{qform}
q=\log\left(1-\frac{J}{b_1b_2}\right) <0  
\eeq 
%(noting that $b_1b_2>J$ must hold 
by the initial assumption on $\sgn (q)$. Finally, 
having specified the right-hand side of (109) as functions of $t$, an additional quadrature with respect to $t$ 
yields $Q=Q(t)$. 

In order to carry out the integration explicitly, it is convenient to make use of the standard T-substitution, 
to convert  (\ref{thdot}) into a  rational differential equation for the variable 
$$ 
T=\tan\frac{\theta}{2} . 
$$ 
Thus (\ref{thdot}) is transformed to 
\beq\label{Tdot} 
\dot{T}= F(T), \qquad \mathrm{where} \qquad F(T) = \frac{P(T)}{T^2+1},  
\eeq 
with the quartic polynomial 
\beq\label{Pquart}
P(T) = \left( Jk_1k_2 -\frac{1}{4k_1k_2} \right) (T^2+1)^2 
-\frac{\lambda}{2k_1}(T^4-1) + \frac{\lambda}{k_2}T(T^2+1)+2\lambda^2 T(T^2-1). 
\eeq 
Then applying  a partial fraction decomposition, we find 
$$ 
\frac{1}{F(T)}= \frac{T^2+1}{P(T)}=K^{-1} \sum_{j=1}^4 \frac{(T_j^2+1)e_j}{T-T_j}, 
$$ 
where the quartic $P(T)$ is factorized as 
$$ 
P(T)=K\prod_{j=1}^4(T-T_j), \qquad K=Jk_1k_2 -\frac{1}{4k_1k_2}-\frac{\lambda}{2k_1}, \qquad  
\mathrm{and} \quad  
e_j=\prod_{1\leq k \leq 4, k\neq j}(T_j-T_k)^{-1}.
$$ 
Thus the general solution of (\ref{Tdot}) is given implicitly by 
\beq\label{Tsoln} 
K^{-1} \sum_{j=1}^4 (T_j^2+1)e_j \log(T-T_j) = t+\mathrm{const}.
\eeq
The above form of the solution is valid for complex values of $T$ (and $t$), where the constant 
of integration should also be allowed to be complex; but since we 
are interested in real values of $T=\tan(\theta/2)$ for real $t$, in the case where 
the coefficients of $P(T)$ are all real, the solution 
may need to be specified in different forms according to the combinations of real/complex 
roots of this quartic. For example, if the four roots $T_j$ are all real then for 
real $T$ the solution can be written 
as    
\beq\label{realt} 
K^{-1} \sum_{j=1}^4 (T_j^2+1)e_j \log\vert T-T_j\vert = t-t_0 
\eeq
with a real constant of integration $t_0$. If, on the other hand, $P(T)$ has two real roots and 
a complex conjugate pair, then (for real $T$) two of the logarithms in (\ref{Tsoln}) can be 
combined into an arctangent. Note that 
the roots of $P(T)$ correspond precisely to the points in the $(b_1,b_2)$ plane where 
the ellipse (\ref{ellipse}) intersects with the hyperbola $b_1b_2=J$, and 
the proof of Lemma \ref{pos} guarantees that there are two 
such points in the positive quadrant, hence $P(T)$ has  at least two real roots.  
Having obtained $T(t)$ implicitly, from    (\ref{b12theta}) we then find $b_1,b_2$ 
as 
\beq\label{b12form} 
b_1=\frac{2\lambda T}{k_1 (1+T^2)}-\frac{1}{2k_1^2}, \qquad  
b_2=\frac{\lambda (1-T^2)}{k_2 (1+T^2)}-\frac{1}{2k_2^2},
\eeq 
and hence $q$ is found from (\ref{qform}). 

For the second quadrature, to find $Q(t)$ from the last equation in (\ref{Qq}), it is convenient to 
write 
$$ 
\dot{Q}=\frac{dQ}{dT} \, \dot{T}, 
$$ replace the right-hand side of (\ref{Qq}) by the corresponding expressions in terms of $T$, and 
then obtain $Q=Q(T)$ by integrating with respect to $T$ (instead of $t$). This leads to the 
equation 
\beq\label{QTform}
\frac{dQ}{dT}=2\lambda R(T)\, \left(\frac{1}{P(T)}- \frac{1}{\hat{P}(T)}\right),   
\eeq 
given in terms of two additional  polynomials, one quadratic and the other quartic, namely  
$$ 
R(T)=\lambda (T^2+1) 
-\frac{T}{k_1}  +\frac{T^2-1}{2k_2}, \qquad 
\mathrm{and} \qquad  
\hat{P}(T) = Jk_1k_2(T^2+1)^2 -P(T).
$$
Then  
writing 
$$\hat{P}(T)=\hat{K}\prod_{j=1}^4(T-\hat{T}_j)
,$$ 
the general solution 
of (109) can be written in terms of $T=T(t)$ as 
\beq\label{Qsoln} 
Q=2\lambda \Big( K^{-1} \sum_{j=1}^4 R(T_j) e_j \log (T-T_j)  - 
\hat{K}^{-1} \sum_{j=1}^4 R(\hat{T}_j) \hat{e}_j \log (T-\hat{T}_j)\Big) +\mathrm{const}, 
\eeq 
with $$ \hat{e}_j=\prod_{1\leq k \leq 4, k\neq j}(\hat{T}_j-\hat{T}_k)^{-1}. $$

The preceding formulae can be used to describe the scattering of two peakons, 
in terms of their asymptotic behaviour %of the two peakons 
as $t\to\pm \infty$. For certain values of the parameters/initial data, 
the behaviour of two peakons in the Popowicz system appears to be  
qualitatively similar to that of peakons in integrable PDEs such as the Camassa-Holm and 
Degasperis-Procesi equations: for large negative/positive times the two peaks are well separated and asymptotically move 
with constant velocities and constant amplitudes. However, there is one main difference: unlike those integrable 
single component equations, in which the two peakons asymptotically switch their velocities and amplitudes, resulting only 
in a phase shift  in the resulting trajectories before/after interaction, the Popowicz peakons exchange different 
amounts of velocity and amplitude during the interaction (with the amplitudes being different for the two 
components $u,v$), so that generically the pair of peakon velocities is different before and after.

The asymptotic form of the two-peakon solution for  the Popowicz system is controlled by the  
first order ODE (\ref{Tdot}) for $T$. This equation has fixed points at the roots of $F(T)$, i.e.\ at $T=T_k$ where 
the roots of the quartic polynomial $P(T)$ lie. Near to a fixed point, the local behaviour is 
$$ 
T\sim T_k +A_k e^{F'(T_k)t}, 
$$
where $A_k$ is a constant, and we have 
\beq\label{fpform} 
F'(T_k) = \frac{K}{(1+T_k^2)e_k} 
\eeq 
compared with the coefficients in (\ref{Tsoln}).
The initial data for the peakon system at $t=0$ determines an initial point on the ellipse $h=$ const in the 
$(b_1,b_2)$ plane, and hence an initial angle $\theta(0)$ and corresponding value $T(0)=\tan(\theta(0)/2)$. 
Given that $T(0)$ lies between two real roots of $F$ (or equivalently of $P$), the fact that $J<b_1b_2$ and the 
assumption $q<0$ implies from (\ref{thetacalc}) that $\dot{\theta}<0$, hence $\dot{T}<0$. We 
denote the two adjacent roots by $T_\pm$ with 
$$T_+<T(0)<T_-,$$ 
and the asymptotic behaviour is then given by 
\beq\label{Tasy} 
T\sim T_\pm \pm e^{F'(T_\pm)t+\delta_\pm} \qquad\mathrm{as}\qquad t\to\pm\infty, 
\eeq 
where the constant $\delta_\pm$  depends on the terms in  (\ref{Tsoln}) that are regular at $T= T_\pm$, as well 
as the integration constant; so in particular, when there are four real roots, $\delta_\pm$ depends on 
the arbitrary constant $t_0$ in (\ref{realt}). Moreover, the given assumptions imply that $T_+$ is a 
stable fixed point of (\ref{Tdot}), and $T_-$ is  
unstable, so 
$$ 
F'(T_+)<0< F'(T_-).
$$ 

From (\ref{b12form}) we can immediately read off the asymptotic amplitudes of the two peakons in the field $v(x,t)$, 
that is  
\beq\label{b12asy} 
b_1\to b_1^{\pm} = \frac{2\lambda T_\pm}{k_1 (1+T_\pm^2)}-\frac{1}{2k_1^2}, \qquad  
b_2\to b_2^{\pm} = \frac{\lambda (1-T_\pm^2)}{k_2 (1+T_\pm^2)}-\frac{1}{2k_2^2},  \qquad\mathrm{as}\quad t\to\pm\infty. 
\eeq 
The corresponding amplitudes for the field $u(x,t)$  are then obtained from the formula $a_j=C_jb_j^2=k_j^2b_j^2$. 
Note that the points $(b_1^{\pm}, b_2^\pm)\in\R^2$ are precisely the two intersections 
of the ellipse (\ref{ellipse}) with the upper branch of the hyperbola $b_1b_2=J$, which must always 
exist if the initial amplitudes are positive, by 
the proof of Lemma \ref{pos}. % it follows that 
The asymptotic behaviour of the positions is more complicated. For the difference $q=q_1-q_2$ we have from 
(\ref{thetacalc}) and  (\ref{qform})  that 
\beq\label{qasy}
\begin{array}{rcl}
q & = & \log (-\dot{T}) - \log(k_1k_2b_1b_2(1+T^2)) \\
& \sim & F'(T_\pm)t+\delta_\pm+\log(\mp F'(T_\pm)) -\log(k_1k_2J(1+T_\pm^2))  
\quad\mathrm{as}\quad t\to\pm\infty, 
\end{array}
\eeq 
where we used (\ref{Tasy}) and the fact that $b_1b_2\to J$ as $|t|\to\infty$.
The sum $Q=q_1+q_2$ is determined  from (\ref{Qsoln}), which near $T=T_\pm$ 
gives 
$$ 
Q\sim \frac{2\lambda e_\pm R(T_\pm) F'(T_\pm)}{K} \, t + \mathrm{const}, 
$$ 
hence from (\ref{fpform}) we have 
\beq\label{Qasy} 
Q\sim \frac{2\lambda R(T_\pm)}{1+T_\pm^2} \, t +  \mathrm{const} \qquad\mathrm{as}\qquad t\to\pm\infty, 
\eeq 
where the constant depends on $\delta_\pm$, as well as the arbitrary constant of integration 
and the other terms 
in (\ref{Qsoln}). 

It is worth comparing these results with the corresponding asymptotic formulae for Camassa-Holm peakons \cite{ch, chh}:  
in that case, if the leftmost peak at position $q_1$ has asymptotic velocity $c_1$  (which is 
the same as its amplitude) for large negative times, and the peak at $q_2>q_1$ has  asymptotic velocity $c_2$, with $c_1>c_2$ so that they 
collide, then for large positive times these asymptotic velocities (and amplitudes) are switched. In 
terms of the difference $q$  this corresponds to 
having 
$$ 
q\sim \mp (c_1-c_2)t + \mathrm{const} \qquad
\mathrm{as} \quad t\to \pm\infty, 
$$ 
while for the sum $Q$ the leading order behaviour is the same in both asymptotic regimes, that is 
$$ 
Q\sim  (c_1+c_2)t + \mathrm{const} \qquad
\mathrm{as} \quad t\to \pm\infty;  
$$ 
the next to leading order (constant) terms determine the phase shifts, i.e.\ the changes in the relative position 
of each soliton that result from their interaction.  For the Popowicz peakons, in contrast, such 
switching of asymptotic velocities is far from being  generic behaviour: 
from (\ref{qasy}) and (\ref{Qasy}), 
asymptotic switching would require that both 
$$
P'(T_+)/(T_+^2+1) = - P'(T_-)/(T_-^2+1)
$$ 
(the asymptotic velocity of $q$ changes sign between $t\to\pm\infty$)
and 
$$ 
\frac{2\lambda R(T_+)}{1+T_+^2}=\frac{2\lambda R(T_-)}{1+T_-^2}
$$ 
($Q$ has the same asymptotic velocity for $t\to\pm\infty$)  
should hold, which puts  constraints on the parameters/initial values. 

\begin{figure} \centering 
\includegraphics[width=8cm,height=8cm,keepaspectratio]{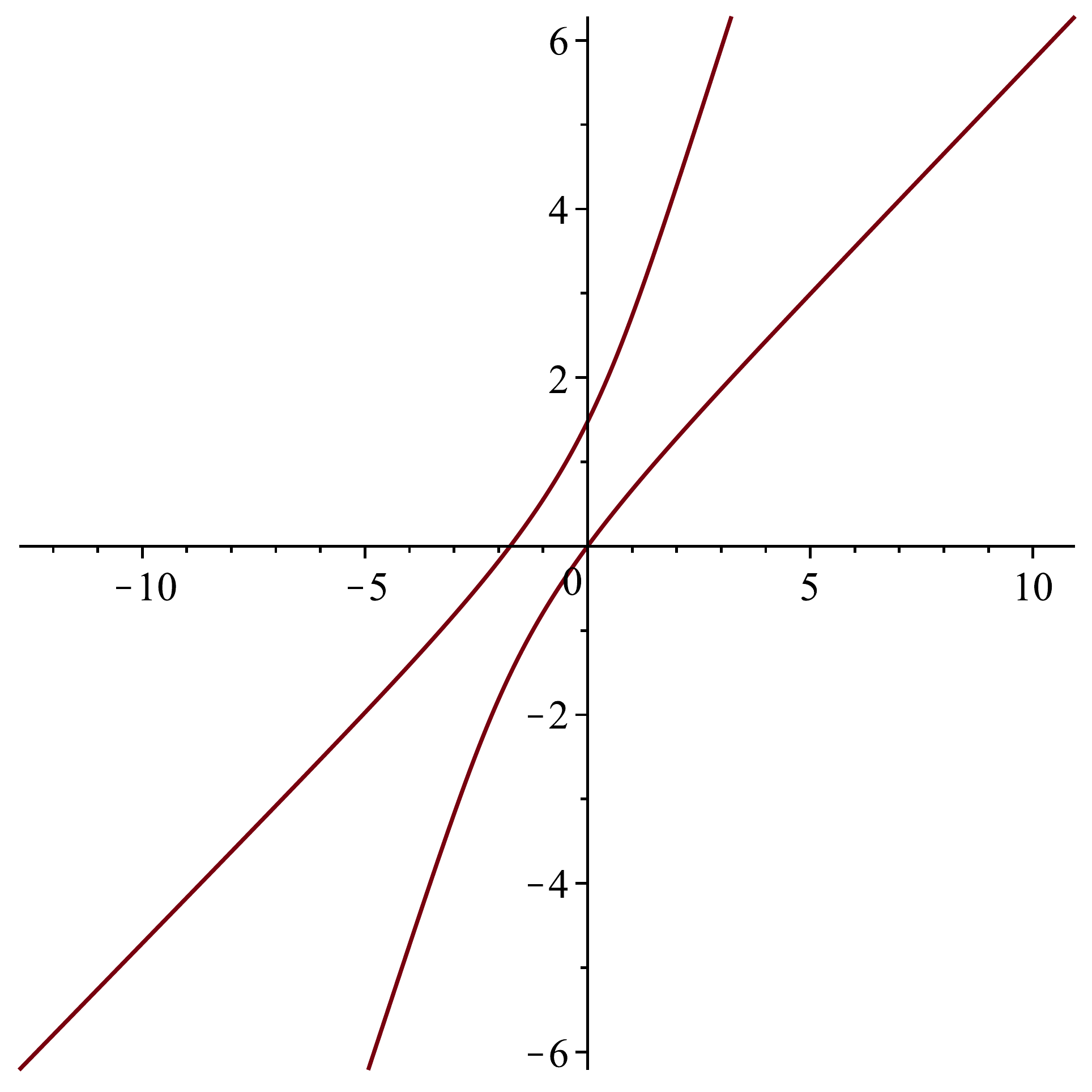}
\caption{\small{Scattering of two peakons with initial data (\ref{inits}) as a spacetime plot (time is the vertical axis).}}
\label{scat}
\end{figure}

To illustrate these results, it is instructive to consider a particular numerical example. 
Upon choosing the initial data and parameters for (\ref{bq}) as 
\beq\label{inits} 
b_1(0) = \frac{94417}{165416}, \, 
 b_2(0) = \frac{103298821}{164092672}, \, 
q_1(0) = \log \left(\frac{2890857125}{16447158149}\right), \, 
q_2(0) = 0, \, k_1=1, \, k_2= \frac{496}{593},  
\eeq
%which implies that 
the values of the first integrals are 
found to be 
$$ 
h=\frac{58672865}{16267808}, \qquad J = \frac{298710111}{1008604096}. 
$$  
Then  $\la = 35425/22816$, and 
$$ P(T) =-\frac{3365375}{4066952}\left(T+\frac{111}{152}\right) 
\left(T-\frac{3}{10}\right) 
\left(T-\frac{1}{2}\right) (T-8),  
$$ 
so that $T(0)=2/5$ lies between $T_+=3/10$ and $T_-=1/2$, and also 
$$ 
\hat{P}(T) =\frac{6133}{5704}\left(T^2 - \frac{35425}{5704}T+1 \right)\left(T^2-\frac{10893}{24532}\right), 
$$ 
where the latter quartic  also has four real roots, namely 
$T=\frac{35425}{5704}\pm \sqrt{\frac{1124788161}{32535616}},$ $\pm \sqrt{\frac{10893}{24532}}$. 
Substituting the roots of $P$ %and $\hat{P}$ 
into (\ref{realt}) and setting $T=2/5$ at $t=0$ yields   
$
t_0\approx 0.2686597887$, %020905
and similarly the integration constant in (\ref{Qsoln}) can be fixed, 
so that $q_1=(Q+q)/2$ and $q_2=(Q-q)/2$ are completely determined 
parametrically in terms of $t(T)$. Figure \ref{scat} is a plot of their trajectories, with $q_1$ being the 
topmost curve.  Figure \ref{pscat} shows the same scattering process in the form 
of a contour plot of $v=v(x,t)$ viewed from above; the figure was made by joining 
%dark band visible around $t=0$ is an artefact of the overlap between 
two different parametric plots, which were 
required to deal with larger positive/negative 
$t$ values separately. 
The asymptotic amplitudes are 
$$ 
(b_1^-,b_2^-)=\left(\frac{4233}{5704}, \frac{70567}{176824}\right)\approx 
(0.7421107994, %389902 
0.3990804416) %(previous digit is 5) 690178
\qquad 
\mathrm{for} \quad t\to - \infty, 
$$ 
and 
$$ 
(b_1^+,b_2^+)=\left(\frac{2023}{5704}, \frac{147657}{176824}\right)\approx 
(0.3546633941,  %093969
 0.8350506719) %(previous digit is 8) 544994
\qquad 
\mathrm{for}\quad  t\to \infty.
$$ 
% asymptotic velocities Q+ is 39318825/16267808, Q- is 40108185/16267808, 
% q+ is -19594575/16267808, q- is 19873425/16267808 sign???
Then combining (\ref{qasy}) and (\ref{Qasy}), we 
find that the asymptotic positions of the two peakons are given by 
$$ 
q_j\sim c_j^\pm t+\mathrm{const}, \qquad \mathrm{as} \quad t\to\pm\infty, \qquad j=1,2,  
$$ 
where for $t\to -\infty$ the asymptotic velocities are 
$$ 
c_1^- = \frac{29990805}{16267808}\approx 1.8435676767 %26944, 
, \qquad 
c_2^- = \frac{2529345}{4066952}\approx 0.6219264452 %(previous digit is 1)608969 
, 
$$  
and for $t\to \infty$ they are 
$$ 
c_1^+ = \frac{9862125}{16267808}\approx 0.6062356404 %%(previous digit is 3)517917  
, \qquad 
c_2^+ = \frac{7364175}{4066952}\approx 1.8107356566 %(previous digit is 5)80161,
. 
$$
Hence in this case the two peakons do not exactly exchange their asymptotic velocities and amplitudes, but only 
approximately so. 

\begin{figure} \centering 
\includegraphics[width=9cm,height=9cm,keepaspectratio]{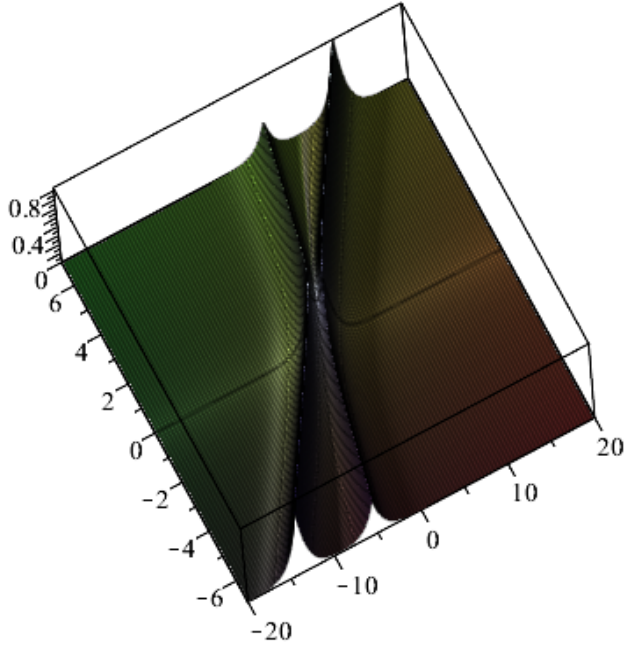}
\caption{\small{%Scattering of two peakons with initial data (\ref{inits}): 
Contour plot of $v(x,t)$ in the box 
$-20\leq x\leq 20$, $-6\leq t\leq 6$, $0\leq v\leq 0.8$.}}
\label{pscat}
\end{figure}

\section{Conclusions}
\setcounter{equation}{0} 

We have considered the dynamics of peakon solutions in the non-integrable coupled system (\ref{pop}). 
In the absence of a weak formulation appropriate for these solutions, they are interpreted as 
distributional solutions in such a way that the peakons inherit the Hamiltonian properties of the PDE 
system, so their dynamics is conservative. The two-peakon dynamics is Liouville integrable, 
and we have explicitly integrated the equations of motion and described the interaction of the 
peakons in the case when all the amplitudes are positive. The case where the amplitudes 
have mixed sign (peakon-antipeakon interaction) is more subtle: in the Camassa-Holm case, 
it involves a head-on collision, with overlapping peaks \cite{ch, chh}; while in this case, 
if the peakons overlap ($q=0$) then the form of first integral $J$, as in (\ref{jform}), implies that 
at least 
one of the amplitudes must diverge to infinity.

For three or more peakons, we do not expect that the dynamics of peakons is integrable. 
Nevertheless, in the case where all the peakons have  positive amplitudes, the qualitative 
features of their interaction 
should be similar to the two-peakon case. In particular, it is not hard to see that 
the analogue of Lemma \ref{pos} holds for all $N$: upon fixing $a_j=k_j^2 b_j^2>0$, 
the fixed  energy hypersurface 
$$ 
h=2\sum_{j=1}^N  (k_j^2b_j^2+b_j) =\mathrm{const} 
$$ 
is a compact quadric  in $\R^N$, so from the form of (\ref{jint}) there can 
be no overlap  $q_k=q_{k+1}$ between initially  adjacent peaks; but then, from the 
ordering (\ref{ordering}), peaks that are initially non-adjacent cannot overlap without first passing through 
their nearest neighbours, which cannot happen, $0<J<\prod_{j=1}^N b_j$ holds,  and the result follows. 
Furthermore, it seems reasonable that the overall dynamics of three or more peakons should be 
determined approximately by the local  interaction between each pair of peaks, at least 
when they are well separated from the rest. 

It would be interesting to carry out numerical studies of the peakon ODEs (\ref{peakoneq})  for $N>2$, and 
to perform a numerical integration of the full PDE system (\ref{pop}) 
to see whether peakons emerge  naturally from generic initial data, as is the case for the b-family 
in the parameter range $\rb>1$ \cite{hs}. However, the PDE integration is likely to be at least as challenging as 
for scalar peakon equations, which are already known to be difficult (for instance, see \cite{clp} and references).   
Before embarking on such a study, it is worth noting that all 
of the considerations in this paper admit a natural generalization to a
vector ${\bf b}$-family of PDEs, given by 
%\beq\label{vecb}
$$ 
{\bf m}_t = \hat{{\bf B}}\, \frac{\delta H_0}{\delta {\bf m}}, 
$$ 
%\eeq 
with  
$$ 
{\bf m}=(m_1,m_2,\ldots, m_d)^T, \qquad 
{\bf b}=(\rb_1,\rb_2,\ldots, \rb_d)^T 
$$ 
being a $d$-component vector of fields and a corresponding vector of parameters, and 
$$ 
H_0=\int\sum_{j=1}^d m_j \, \rd x, 
\qquad \hat{{\bf B}} = {\bf w} \, {\cal L}^{-1}\,  {\bf w}^\dagger , 
$$ 
where ${\bf w}=(w_1,w_2,\ldots, w_d)^T$ is a vector operator with components 
$$ 
w_j = \rb_j \, m_j^{1-1/\rb_j}\, \partial_x \, m_j^{1/\rb_j}, \qquad j=1,\ldots,d. 
$$ 
So the original b-family (\ref{bfam}) is just the case $d=1$, while the 
Popowicz system corresponds to $d=2$ with fields $(m_1,m_2)=(m,n)$ 
and parameters $(\rb_1,\rb_2)=(3,2)$.  

\noindent{\bf Acknowledgements:} LEB was supported by a studentship from SMSAS, University of Kent. 
ANWH is supported by EPSRC fellowship EP/M004333/1, and is grateful to the 
School of Mathematics \& Statitsics, UNSW for hosting him as a Visiting 
Professorial Fellow with additional funding from the Distinguished 
Researcher Visitor Scheme. We also thank the reviewers for their comments.

\end{document}